\numberwithin{equation}{section}
\theoremstyle{plain}
\newtheorem{theorem}{Theorem}
\newtheorem{lemma}{Lemma}
\theoremstyle{definition}
\newtheorem*{assh*}{(H) Singular distributions}
\newcommand{\condH}{\mathbf{(H)}}
\newcommand{\prob}[1]{\DP\left\{#1\right\}}
\newcommand{\esm}[1]{\mathbb{E}\left[\,#1\,\right]}
\newcommand{\BC}{\mathbf{C}}
\newcommand{\BG}{\mathbf{G}}
\newcommand{\BH}{\mathbf{H}}
\newcommand{\BU}{\mathbf{U}}
\newcommand{\BV}{\mathbf{V}}
\newcommand{\BP}{\mathbf{P}}
\newcommand{\CJ}{\mathcal{J}}
\newcommand{\DP}{\mathbb{P}}
\newcommand{\DR}{\mathbb{R}}
\newcommand{\DZ}{\mathbb{Z}}
\newcommand{\BDelta}{\mathbf{\Delta}}
\newcommand{\BPsi}{\mathbf{\Psi}}
\newcommand{\Bx}{\mathbf{x}}
\newcommand{\By}{\mathbf{y}}
\newcommand{\FB}{\mathfrak{B}}
\newcommand{\ee}{\mathrm{e}}
\DeclareMathOperator{\dist}{dist}
\begin{document}
\title[Wegner bounds for multi-particle Bernoulli-Anderson models]{Wegner bounds for $N$-body interacting Bernoulli-Anderson models in one dimension}

\author[T.~Ekanga]{Tr\'esor EKANGA$^{\ast}$}

\address{$^{\ast}$%
Institut de Math\'ematiques de Jussieu, 
Universit\'e Paris Diderot,
Batiment Sophie Germain,
13 rue Albert Einstein,
75013 Paris,
France}
\email{tresor.ekanga@imj-prg.fr}
\subjclass[2010]{Primary 47B80, 47A75. Secondary 35P10}
\keywords{multi-particle, weakly interacting systems, random operators, Anderson localization}
\date{\today}

\begin{abstract}
Under the weak interaction regime, we prove the one and the two volumes Wegner type bounds for one-dimensional multi-particle models on the lattice and for very singular probability distribution functions such as the Bernoulli's measures. The results imply the Anderson localization in both the spectral exponential and the strong dynamical localization for the one dimensional multi-particle Bernoulli-Anderson model with weak interaction.
\end{abstract}
\maketitle

\section{Introduction}

We are interested in this work to multi-particle Anderson models on the lattice space under the weak interaction regime and more important, with very singular probability distributions.

Different forms of the Wegner estimates for the single-particle as well as for the multi-particle theory have been obtained in earlier works, see for example \cites{CL90,CS08,HK13,K08,St01,W81} for the most famous results. In both of the papers, the common probability distribution functions of the i.i.d. random variables in the Anderson model was finally assumed to be at least H\"older continuous or even log-H\"older continuous in the road map to localization. In the above papers, the proofs were done with different strategies: some of them using the Feymann-Hellmann formula for operators \cite{K08} and others using the Stollmann's Lemma \cites{CS08,St01} or the scale-free unique continuation principle \cite{HK13}. We also refer to \cite{W81}, for the original Wegner bounds.

The Wegner estimates play an important role in the general strategy of Anderson localization with methods based on the fractional moment method of local resolvent as well as the one based on the multi-scale analysis in order to bound the probability of some resonant effects.

Localization for the Anderson models with Bernoulli distributions was obtained for one dimensional systems on the lattice by Carmona et al. \cite{CKM87} and in the continuum, by Damanik et al. \cite{DSS02}. Bourgain and Kenig \cite{BK05}, themselves prove localization at low energy for the multi-dimensional Bernoulli-Anderson model in the continuum using a strong form of the unique continuation principle in the continuum.

In some earlier works \cites{E11,E13,E16}, we prove localization using the Wegner type bounds established in the paper \cite{CS08} for the discrete case and in \cites{BCSS10,BCS11} for the continuum case.

We prove in the present paper , the stability of the Wegner bounds from the one dimensional single-particle system to weakly interacting multi-particle systems and for Bernoulli probability distribution functions of the i.i.d. stochastic random processes. The general strategy uses a perturbation argument based  on the resolvent identities for operators in Hilbert spaces.

Actually, what makes the machine work is that the one dimensional single-particle Wegner bounds are very strong (the decay rate is exponential). While, in the multi-scale analysis, only some polynomial decay bounds of the resonant effects are needed.

As a consequence of the Wegner bounds, we also obtain in the sequel, the Anderson localization for the one dimensional Bernoulli-Anderson model on the lattice. 

We stress that the multi-particle Wegner bounds of the present work do not need an additional assumption called separability of the bounded domains. The weak interacting regime itself is sufficient to prove our multi-particle Wegner type bounds without any condition on the domains. While in some numerous previous works \cites{CS08,HK13}, separability was crucial in order to prove the Wegner estimates and it results in some complications for the multi-scale analysis to prove the Anderson localization because in that case, one also has to analyze the cubes belonging to the bad region corresponding to the cubes which are not separable. What makes the machine works is that for a given cube , there is a finite number of cubes which are not separated  with the first one.

The main difference in our case is that, we prove the variable energy Wegner type bounds in intervals of small amplitude and such bounds are sufficient for the multi-scale analysis under the weak interaction regime see \cite{E13}.

Now, let us describe the structure of the paper. In the next Section, we set-up the model, the assumptions and formulate the main results. Section 3, is devoted to the road map to the proof of the main results. Finally, in Section 4, we prove the results on the Wegner estimates and discuss on the Anderson localization.

\section{The model and the main results}

\subsection{Our basic model and geometry}
We define the two following norms on $\ell^2(\DZ^D)$ for arbitrary $D\geq 1$: $|\Bx|=\max_{i=1,\ldots,D}|x_i|$ and $|\Bx|_1=|x_1|+ \cdots+|x_D|$. We consider a system of $N$-particles where $N\geq 2$ is finite and fixed. Let $d\geq 1$. We will prove the Wegner bounds for the random Hamiltonians $\BH^{(n)}_h(\omega)$ of the form

\begin{equation}\label{eq:def.H}
\BH^{(n)}_h(\omega)=-\BDelta + \sum_{j=1}^n V(x_j,\omega)+h\BU=-\BDelta + \BV(\Bx,\omega)+h\BU,
\end{equation}

acting on $\ell^2((\DZ^d)^n)\cong \ell^2(\DZ^{nd})$ with $h\in\DR$ and $\Bx\in(\DZ^d)^n$. Above, $\BDelta$ is the $nd$-dimensional lattice nearest neighbor Laplacian:

\begin{equation}
(\BDelta\BPsi)(\Bx)=\sum\limits_{\substack{\By\in\DZ^{nd}\\|\By-\Bx|_1=1}}(\BPsi(\By)-\BPsi(\Bx))=\sum\limits_{\By\in\DZ^{nd}\\|\By-\Bx|_1=1} \BPsi(\By)-2nd\BPsi(\Bx),
\end{equation}
for $\BPsi\in\ell^2(\DZ^{nd})$ and $\Bx\in \DZ^{nd}$. $V: \DZ^d\times \Omega\rightarrow  \DR$ is a random field relative to a probability space $(\Omega,\FB,\DP)$ and $\BU: (\DZ^d)^n\rightarrow \DR$ is the potential of the inter-particle interaction, $\BU$ is a bounded operator acting as multiplication operator on $\ell^2(\DZ^{nd})$ by the function $\BU(\Bx)$. $\BV$also acts on $\ell^2(\DZ^{nd})$ as a multiplication operator on $\ell^2(\DZ^{nd})$ by the function $\BV(\Bx,\omega)$.

For any $\Bx\in\DZ^{nd}$ and $L>0$, we denote by $\BC^{(n)}_L(\Bx)$, the $n$-particle cube in $\ell^2(\DZ^{nd})$, i.e., $\BC^{(n)}_L(\Bx):=\{\By\in\DZ^{nd}: |\Bx-\By|\leq L\}$. We also denote by $\sigma(\BH^{(n)}_h(\omega))$ , the spectrum of $\BH^{(n)}_h(\omega)$ and by $\BG^{(n)}_{\BC^{(n)}_L(\Bx),h}(E)$ the resolvent operator of $\BH^{(n)}_{\BC^{(n)},h}(\omega)$ for $E\notin \sigma(\BH^{(n)}_{\BC^{(n)}_L(\Bx),h}(\omega))$ and where the restriction is taken with Dirichlet boundary conditions.

\subsection{The assumption}
We denote by $\mu$, the common probability distribution measure of the i.i.d., random variables $\{V(x,\omega)\}_{x\in\DZ^d}$.

\begin{assh*}
We assume that the support of the measure $\mu$ is not concentrated in a single point and $\int|x|^{\eta}d\mu(x)<+\infty$ for some $\eta>0$.
\end{assh*}

\subsection{The results}

\begin{theorem}[One volume Wegner bound]\label{thm:Wegner.1.volume}

Let $d=1$ and $\BC^{(n)}_L(\Bx)$ be an $n$-particle cube in $\ell^2(\DZ^{nd})$. Assume that hypothesis $\condH$ holds true. Let $I\subset\DR$ be a compact interval. For any $0<\beta<1$, there exist $L_0=L_0(I,\beta)$ and $h^*=h^*(\|\BU\|,L_0)$ such that for all $h\in(-h^*,h^*)$,
\[
\prob{\dist(E,\sigma(\BH^{(n)}_{\BC^{(n)}_L(\Bx),h}(\omega)))\leq \ee^{-L^{\beta}}}\leq L^{-q},
\]
for all $E\in I$, $L\geq L_0$ and any $q>0$.
\end{theorem}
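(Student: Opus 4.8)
The plan is to exploit the tensor‑product structure of the $n$‑particle box together with the smallness of the coupling $h$, so as to reduce the estimate to the one‑dimensional single‑particle Wegner bound, which for $d=1$ is exponentially strong and therefore leaves a wide margin. First I would record that $\BC:=\BC^{(n)}_L(\Bx)=\prod_{j=1}^n I_j$ with $I_j:=[x_j-L,x_j+L]\cap\DZ$, and set $\Pi:=\bigcup_{j=1}^n I_j$, so that $\BH^{(n)}_{\BC,h}(\omega)$ depends only on the at most $2L+1$ variables $\{V(u,\omega)\}_{u\in\Pi}$. By $\condH$ and Markov's inequality, $\DP\{\max_{u\in\Pi}|V(u,\omega)|>L^{(q+2)/\eta}\}\le(2L+1)\,L^{-(q+2)}\,\esm{|V|^{\eta}}\le\tfrac12L^{-q}$ for $L\ge L_0$, so it is enough to bound the probability on the complementary event, on which every operator appearing below has norm at most a fixed power of $L$. (For Bernoulli measures, or any $\mu$ of bounded support, this reduction is vacuous.)

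The one external ingredient is the one‑dimensional single‑particle Wegner bound. By the strict positivity and the uniform‑on‑compacts lower bound of the Lyapunov exponent for non‑degenerate i.i.d.\ potentials, together with the large‑deviation estimates for the associated transfer matrices (cf.\ \cite{CKM87} and the one‑dimensional localization theory), for every $\beta'\in(0,1)$, every $Q>0$ and every compact $J\subset\DR$ there is an $\ell_0$ such that for every interval $\Lambda\subset\DZ$ of length $\ell\ge\ell_0$, every deterministic $W$ on $\Lambda$ with $\|W\|_\infty\le\|\BU\|$, and every $E'\in J$,
\[
\DP\{\dist(E',\sigma(-\Delta_\Lambda+V|_\Lambda+W))\le\ee^{-\ell^{\beta'}}\}\le\ell^{-Q}.
\]

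Suppose first that $I_1,\dots,I_n$ are pairwise disjoint. I would condition on $\{V(u,\omega)\}_{u\notin I_n}$, so that $\BH^{(n)}_{\BC,h}(\omega)=\widehat H\otimes\BI+\BI\otimes H^{(1)}_{I_n}(\omega)+h\BU$ with $\widehat H$ deterministic on $\ell^2(\prod_{j<n}I_j)$; diagonalising $\widehat H=\sum_\alpha\Lambda_\alpha\,|\Psi_\alpha\rangle\langle\Psi_\alpha|$ (at most $(2L+1)^{n-1}$ indices) presents $\BH^{(n)}_{\BC,h}$ as a block operator with $\alpha$‑th diagonal block $\Lambda_\alpha+H^{(1)}_{I_n}(\omega)+hW_\alpha$, $\|W_\alpha\|_\infty\le\|\BU\|$, and off‑diagonal blocks of norm $\le|h|\,\|\BU\|$. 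Using the resolvent identity together with a decomposition onto the blocks that are resonant with $E$, and choosing $h^*=h^*(\|\BU\|,L_0)$ small enough, the point is to show that $\dist(E,\sigma(\BH^{(n)}_{\BC,h}(\omega)))\le\ee^{-L^{\beta}}$ forces a single‑particle resonance $\dist(E-\Lambda_\alpha,\sigma(H^{(1)}_{I_n}(\omega)+hW_\alpha))\le\ee^{-L^{\beta/2}}$ for at least one $\alpha$. The bound of the previous paragraph, applied with $\Lambda=I_n$, $\ell=2L+1$, $\beta'=\beta/4$ (so that $\ee^{-(2L+1)^{\beta/4}}\ge\ee^{-L^{\beta/2}}$ for $L\ge L_0$) and $Q=q+n$, then bounds the conditional probability of each such resonance by $(2L+1)^{-(q+n)}$; summing over the $\le(2L+1)^{n-1}$ indices $\alpha$ and taking the expectation over the conditioning gives $\le\tfrac12L^{-q}$, which with the first reduction proves the theorem for separated boxes. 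For a general centre $\Bx$, where the $I_j$ may overlap and no separability is assumed, I would instead split a single‑particle interval of length of order $L$ into a bounded number of pairwise disjoint sub‑intervals carrying independent potentials, and treat the resulting couplings between sub‑boxes perturbatively in $h$; this plays the role here of the separability hypothesis used in \cites{CS08,HK13}.

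The hard part will be the perturbative step in the previous paragraph. The naive estimate merely places $\sigma(\BH^{(n)}_{\BC,h})$ within $|h|\,\|\BU\|$ of $\sigma(\BH^{(n)}_{\BC,0})$, a loss that does not shrink as $L\to\infty$ and is useless at the resonance scale $\ee^{-L^{\beta}}$ when $h$ is fixed; the resolvent expansion has to be organised so that the effective single‑particle resonance scale remains essentially exponential in $L$, which is exactly what the margin between the scale $\ee^{-L^{\beta}}$ and the probability $L^{-q}$ in the one‑dimensional bound is there to provide. Controlling the non‑separable geometry without any extra assumption on the boxes is the other delicate point.
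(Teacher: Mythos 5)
Your overall architecture (reduce to the one-dimensional single-particle Wegner bound through the tensor structure and a conditioning argument, and treat the interaction perturbatively in $h$) matches the paper's, but the step you yourself flag as ``the hard part'' is a genuine gap, and the way you have set it up makes it unresolvable. You want to keep the resonance scale for the effective single-particle problems exponential in $L$ while $h$ stays fixed; since the off-diagonal coupling in your block decomposition has norm $|h|\,\|\BU\|$, independent of $L$, no reorganisation of the resolvent expansion will push it below a threshold of order $\ee^{-L^{\beta/2}}$ for large $L$. The paper's resolution is different and is already signalled in the statement by the dependence $h^*=h^*(\|\BU\|,L_0)$: one applies the second resolvent identity once, globally, to compare $\BG^{(n)}_{\BC,h}(E)$ with $\BG^{(n)}_{\BC,0}(E)$, and works at the \emph{fixed} threshold $\ee^{-\sigma L_0^{\beta}}$ rather than $\ee^{-\sigma L^{\beta}}$. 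Choosing $h^*=\bigl(2\|\BU\|\ee^{\sigma L_0^{\beta}}\bigr)^{-1}$ makes the perturbative term $|h|\,\|\BU\|\,\|\BG^{(n)}_{\BC,0}(E)\|\le 1/2$ on the event $\dist(E,\sigma(\BH^{(n)}_{\BC,0}))>\ee^{-\sigma L_0^{\beta}}$, so an interacting resonance at scale $\ee^{-L^{\beta}}$ forces a non-interacting resonance at the fixed scale $\ee^{-\sigma L_0^{\beta}}$; the price of using this weak threshold is exactly why the conclusion is only the polynomial bound $L^{-q}$ rather than an exponential one. Your proposal never makes $h^*$ depend on $L_0$ in this way, so the margin you are counting on is not there.

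A second, smaller problem: your ``one external ingredient'' is stronger than what \cite{CKM87} provides. You invoke the single-particle bound for $-\Delta_\Lambda+V|_\Lambda+W$ with an arbitrary deterministic bounded background $W$ (the diagonal part $hW_\alpha$ of the interaction), uniformly over such $W$. The positivity of the Lyapunov exponent and the large-deviation estimates for transfer-matrix products in \cite{CKM87} rely on the i.i.d.\ structure of the potential, which an arbitrary $W$ destroys; this extension would require its own proof. The paper sidesteps the issue by never absorbing any part of $h\BU$ into the single-particle operators: the entire interaction is handled by the single global resolvent-identity step described above, and the conditioning argument is applied only to the genuinely non-interacting Hamiltonian.
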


\begin{theorem}[Two volumes Wegner bounds]\label{thm:Wegner.2.volume}
Let $d=1$ and consider two $n$-particle cubes $\BC^{(n)}_L(\Bx)$ and $\BC^{(n)}_L(\By)$ in $\ell^2(\DZ^{nd})$. Assume that hypothesis $\condH$ holds true. Then for any $E_0\in\DR$ and any $0<\beta<1$, there exist $L_0=L_0(\beta)>0$, $\delta_0=\delta_0(\|\BU\|,L_0)>0$ such that if we put $I_0:=[E_0-\delta_0,E_0+\delta_0]$, then there exists $h^*=h^*(\|\BU\|,L_0)>0$ such that for all $h\in(-h^*,h^*)$:
\[
\prob{\exists E\in I_0: \max(\dist(E,\sigma(\BH^{(n)}_{\BC^{(n)}_L(\Bx),h}(\omega)));\dist(E,\sigma(\BH^{(n)}_{\BC^{(n)}_L(\By),h}(\omega))))\leq \ee^{-L^{\beta}}}\leq L^{-q},
\]
for all $L\geq L_0$ and any $q>0$.
\end{theorem}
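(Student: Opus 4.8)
The plan is to deduce Theorem~\ref{thm:Wegner.2.volume} from the one-volume bound of Theorem~\ref{thm:Wegner.1.volume}; the weak-interaction (perturbative) content is already contained in the latter and re-enters here only through the constraint $h\in(-h^{*},h^{*})$ and the smallness of $\delta_0$. Fix $E_0\in\DR$ and $0<\beta<1$, choose an auxiliary exponent $\beta'\in(0,\beta)$, and pick a reference compact interval $I$ with $[E_0-1,E_0+1]\subset\mathrm{int}\,I$. Let $L_0$ be large enough to exceed $L_0(I,\beta)$ and $L_0(I,\beta')$ and to guarantee $(2L+1)^{n}\le L^{n+1}$ and $2\ee^{-L^{\beta}}\le\ee^{-L^{\beta'}}$ for every $L\ge L_0$, and put $h^{*}:=h^{*}(\|\BU\|,L_0)$. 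Finally take $\delta_0=\delta_0(\|\BU\|,L_0)\in(0,1/2)$; then for $L\ge L_0$ the interval $I_0^{+}:=[E_0-\delta_0-\ee^{-L^{\beta}},\,E_0+\delta_0+\ee^{-L^{\beta}}]$ is contained in $I$.

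\emph{Reduction to a counting statement.} If some $E\in I_0$ realises the event in the theorem, then $\sigma(\BH^{(n)}_{\BC^{(n)}_L(\By),h}(\omega))$ contains a point $\lambda$ with $|\lambda-E|\le\ee^{-L^{\beta}}$; hence $\lambda\in I_0^{+}$ and, by the triangle inequality, $\dist(\lambda,\sigma(\BH^{(n)}_{\BC^{(n)}_L(\Bx),h}(\omega)))\le 2\ee^{-L^{\beta}}$. Consequently the event sits inside
\[
\bigcup_{\lambda\in\sigma(\BH^{(n)}_{\BC^{(n)}_L(\By),h}(\omega))\cap I_0^{+}}\Big\{\dist\big(\lambda,\sigma(\BH^{(n)}_{\BC^{(n)}_L(\Bx),h}(\omega))\big)\le 2\ee^{-L^{\beta}}\Big\},
\]
and the number of indices in the union is at most $\dim\ell^2(\BC^{(n)}_L(\By))=(2L+1)^{nd}=(2L+1)^{n}$, which is polynomial in $L$.

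\emph{Decoupling and conclusion.} In the relevant configuration the two cubes are \emph{separated}, i.e.\ $\Pi\BC^{(n)}_L(\Bx)\cap\Pi\BC^{(n)}_L(\By)=\emptyset$, where $\Pi\BC^{(n)}_L(\Bx):=\bigcup_{j=1}^{n}[x_j-L,x_j+L]$ is the set of single-particle sites felt inside $\BC^{(n)}_L(\Bx)$; since the $V(s,\omega)$ are i.i.d., $\BH^{(n)}_{\BC^{(n)}_L(\Bx),h}(\omega)$ and $\BH^{(n)}_{\BC^{(n)}_L(\By),h}(\omega)$ are then independent. Conditioning on the $\sigma$-algebra generated by $\{V(s,\omega):s\in\Pi\BC^{(n)}_L(\By)\}$ freezes the at most $(2L+1)^{n}$ eigenvalues $\lambda\in I_0^{+}$ of the $\By$-operator while leaving the $\Bx$-operator with its full, still independent, randomness. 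Since $\BH^{(n)}_{\BC^{(n)}_L(\Bx),h}(\omega)$ is independent of the conditioning, for each fixed value $\lambda\in I_0^{+}\subset I$ taken by these eigenvalues, Theorem~\ref{thm:Wegner.1.volume} (applied with the exponent $\beta'$ and with $q+n+1$ in place of $q$), together with $2\ee^{-L^{\beta}}\le\ee^{-L^{\beta'}}$, gives $\prob{\dist(\lambda,\sigma(\BH^{(n)}_{\BC^{(n)}_L(\Bx),h}(\omega)))\le 2\ee^{-L^{\beta}}}\le L^{-(q+n+1)}$. Taking expectations over the conditioning, bounding the number of summands by $(2L+1)^{n}\le L^{n+1}$, and using the containment above, the probability in the statement is at most $L^{n+1}\cdot L^{-(q+n+1)}=L^{-q}$, for all $L\ge L_0$ and all $q>0$; hypothesis $\condH$ is invoked only through Theorem~\ref{thm:Wegner.1.volume}.

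\emph{The main obstacle} is hidden in the word ``separated''. For a general pair of $n$-particle cubes the projections $\Pi\BC^{(n)}_L(\Bx)$ and $\Pi\BC^{(n)}_L(\By)$ may overlap, and then conditioning on one box operator partially, or even entirely, determines the other, so the one-volume bound cannot be fed in as above. The route through this is to condition instead on the shared sites $\Pi\BC^{(n)}_L(\Bx)\cap\Pi\BC^{(n)}_L(\By)$, to verify that after this freezing the surviving operator still carries a genuine one-dimensional Anderson operator on an interval of length comparable to $L$ with honest i.i.d.\ randomness---to which the strong (exponentially small) one-dimensional single-particle Wegner estimate underlying Theorem~\ref{thm:Wegner.1.volume} still applies---and, in the few exceptional, fully degenerate pairs where no free site survives, to reduce by the permutation symmetry of $\BH^{(n)}_h(\omega)$ together with the weak-coupling smallness of $h$ to the one-volume situation itself. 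Checking that this conditional single-particle estimate is uniform over the frozen configurations (which are automatically bounded, since $\lambda\in I_0^{+}$ keeps the spectrum of the surviving operator near $E_0$) and that all constants depend only on $\beta$, $E_0$, $\|\BU\|$ and $L_0$ is the technical heart of the matter; the smallness of $\delta_0$ and of $h^{*}$ is precisely what buys this uniformity.
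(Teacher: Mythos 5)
Your proposal follows the classical two--volume route (enumerate the eigenvalues of the $\By$--box, condition, and apply the fixed--energy one--volume bound to the $\Bx$--box), and you correctly identify that this route collapses when the single--particle projections of the two cubes overlap: without separability the two box operators are not independent, and the conditioning step is invalid. You flag this as ``the main obstacle'' and only sketch a possible repair; as written, the non--separated case is a genuine, unresolved gap, and it is not a marginal one --- for multi--particle cubes the non--separated configurations are exactly the problematic ones that forced earlier works (Chulaevsky--Suhov, Hislop--Klopp) to add separability as a hypothesis.

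The point you missed is that none of this machinery is needed here, because the event involves a \emph{maximum}: if there exists $E\in I_0$ with
\[
\max\bigl(\dist(E,\sigma(\BH^{(n)}_{\BC^{(n)}_L(\Bx),h}(\omega)));\dist(E,\sigma(\BH^{(n)}_{\BC^{(n)}_L(\By),h}(\omega)))\bigr)\leq \ee^{-L^{\beta}},
\]
then in particular there exists $E\in I_0$ with $\dist(E,\sigma(\BH^{(n)}_{\BC^{(n)}_L(\Bx),h}(\omega)))\leq \ee^{-L^{\beta}}$. The two--volume event is therefore \emph{contained} in the one--volume \emph{variable--energy} event for a single cube, and the paper bounds its probability directly by the variable--energy one--volume estimate (Theorem~\ref{thm:np.var.Wegner}, combined with the perturbative treatment of the interaction from Theorem~\ref{thm:np.Weak.Wegner}). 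No independence, no conditioning, and no separability enter at any point --- this is precisely the simplification the introduction advertises. Your scheme is the one you would need if only a \emph{fixed}--energy one--volume bound were available, since then the quantifier $\exists E$ must be discharged by restricting $E$ to the finitely many eigenvalues of the other box; but the whole architecture of this paper is built so that the variable--energy bound on the small interval $I_0=[E_0-\delta_0,E_0+\delta_0]$ is already in hand, which makes the two--volume statement an immediate corollary.
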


\section{General strategy of the proofs}
We begin with the result on the Wegner estimates for single-particle models established in the paper by Carmona et al. \cite{CKM87}. The estimate is a very stronger result, namely the decay bound  is exponential. While for the Anderson localization via the multi-scale analysis some polynomial decay of the probability of the local resolvent is sufficient  to prove  localization. The strategy used in this Section in order to prove our main results is completely different to the proof of \cite{CKM87}. Our idea uses some perturbation argument on the local resolvent identities for operators in Hilbert spaces while the work \cite{CKM87}, is based on the study of the Lyapounov exponent and the theory of random transfer matrices for single-particle models in one dimension.

\subsection{The single-particle fixed energy Wegner bound}
The known result for single-particle models and for singular probability distributions including the Bernoulli's measures is the following:

\begin{theorem}\label{thm:1p.Wegner}
Let $I\subset \DR$ be a compact interval and $C^{(1)}_L(x)$ be a cube in $\DZ^d$. Assume that hypothesis $\condH$ holds true, then for any $0<\beta<1$ and $\sigma>0$, there exist $L_0=L_0(I,\beta,\sigma)>0$ and $\alpha=\alpha(I,\beta,\sigma)>0$ such that,
\[
\prob{\dist(E,\sigma(\BH^{(1)}_{C^{(1)}_L(x)}(\omega)))\leq \ee^{-\sigma L^{\beta}}}\leq \ee^{-\alpha L^{\beta}}
\]
for all $E\in I$ and all $L\geq L_0$.
\end{theorem}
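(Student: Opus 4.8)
The statement concerns single-particle operators $\BH^{(1)}_{C^{(1)}_L(x)}(\omega)$ in dimension $d$, with only the hypothesis $\condH$ (Bernoulli-type: support not a single point, plus a moment bound). The natural approach is to reduce to the one-dimensional transfer-matrix/Lyapunov-exponent machinery of Carmona–Klein–Martinelli \cite{CKM87}, which is precisely the source cited at the start of Section 3. Concretely, I would first recall that for one-dimensional Anderson models with a non-degenerate single-site distribution, the integrated density of states is Hölder continuous and, more relevantly, one has exponential-in-$L$ control on the probability that a finite-volume Hamiltonian has an eigenvalue in a small interval around a fixed energy $E$; this is the fixed-energy Wegner bound at exponential rate that \cite{CKM87} establishes en route to localization. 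So the bulk of the proof is a careful citation and quantification: for $E\in I$ compact, one extracts from the Lyapunov exponent lower bound $\gamma(E)\geq \gamma_I>0$ on $I$ (Fürstenberg's theorem plus compactness) a bound of the form $\prob{\dist(E,\sigma(\BH^{(1)}_{C^{(1)}_L})) \leq \varepsilon} \leq C(\varepsilon^s L + \text{something small})$ uniformly in $E\in I$.

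The key point to make the exponential smallness $\ee^{-\alpha L^\beta}$ appear is the choice of threshold $\varepsilon = \ee^{-\sigma L^\beta}$ with $\beta<1$: since the "bad" probability behaves polynomially (or at worst like a fixed positive power $\varepsilon^s$, up to a multiplicative factor polynomial in $L$ coming from the number of lattice points or the Combes–Thomas length scale), substituting $\varepsilon = \ee^{-\sigma L^\beta}$ turns $\varepsilon^s$ into $\ee^{-s\sigma L^\beta}$, and this dominates the polynomial prefactor $L^{C}$ once $L\geq L_0(I,\beta,\sigma)$; one then sets $\alpha = \tfrac{1}{2}s\sigma$, say, and absorbs the prefactor. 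The moment hypothesis $\int|x|^\eta\,d\mu<\infty$ enters only to control the (unlikely) event that some single-site value is atypically large inside $C^{(1)}_L(x)$, via a union bound and Markov's inequality: the probability that $\max_{x\in C^{(1)}_L}|V(x,\omega)| > \ee^{\sigma L^\beta/\eta}$ is at most $|C^{(1)}_L|\,\ee^{-\sigma L^\beta}\int|x|^\eta\,d\mu$, again exponentially small in $L^\beta$, so on the complementary event the potential is bounded by a subexponential quantity and the transfer-matrix estimates apply with uniform constants.

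The main obstacle, and the place where care is genuinely needed, is establishing the fixed-energy (rather than variable-energy) bound with a rate that is \emph{uniform over $E\in I$} and over the \emph{center} $x$ of the cube, while extracting an honest exponential $\ee^{-\sigma L^\beta}$ threshold rather than merely $\varepsilon^s$ with fixed small $\varepsilon$. For this one needs: (i) continuity and strict positivity of $\gamma(\cdot)$ on the compact $I$ together with the large-deviation estimate for $\tfrac1L\log\|\text{transfer matrix}\|$ uniformly in $E\in I$ — this is where the non-concentration part of $\condH$ is used through Fürstenberg/Le Page-type theorems; and (ii) converting "the transfer matrix from $0$ to $L$ at energy $E$ is exponentially large" into "$\dist(E,\sigma)\geq \ee^{-\sigma L^\beta}$" via the standard resolvent-to-transfer-matrix dictionary (a solution that does not decay cannot be close to an eigenfunction), keeping track of boundary terms. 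Once (i) and (ii) are in hand the rest is the elementary substitution described above; I would present (i) as a quoted consequence of \cite{CKM87} and devote the written proof to (ii) and the bookkeeping that produces $L_0(I,\beta,\sigma)$ and $\alpha(I,\beta,\sigma)$.
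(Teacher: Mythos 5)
Your proposal takes essentially the same route as the paper: the paper's entire proof of this theorem is a one-sentence referral to Carmona--Klein--Martinelli \cite{CKM87}, and your plan simply spells out how that paper's Lyapunov-exponent positivity, large-deviation estimates for transfer matrices, and the resolvent-to-transfer-matrix dictionary produce the stated fixed-energy exponential bound, which is a faithful (and more informative) account of the cited source. The only point worth flagging is that the transfer-matrix machinery, and hence the theorem, only applies for $d=1$ even though the statement is phrased for a cube in $\DZ^d$; this is an imprecision of the paper's formulation rather than a gap in your argument, since the main results are anyway restricted to $d=1$.
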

\begin{proof}
We refer the reader to the paper by Carmona et al. \cite{CKM87}.
\end{proof}

\subsection{The non-interacting multi-particle Wegner bound}
We need to introduce first 
\[
 \{ (\lambda^{(i)}_{j_i},\psi^{(i)}_{j_i}): j_i=1,\ldots,|C^{(1)}_L(x_i)|\},
\]
the eigenvalues and the corresponding eigenfunctions of $H^{(1)}_{C^{(1)}_L(x_i)}(\omega)$, $i=1,\ldots,n$. Then, the eigenvalues $E_{j_1\ldots j_n}$ of the non-interacting multi-particle random Hamiltonian $\BH^{(n)}_{\BC^{(n)}_L(\Bx)}(\omega)$ are written as sums:

\[
E_{j_1\ldots j_n}=\sum_{i=1}^{n} \lambda_{j_i}^{(i)}=\lambda_{j_1}^{(1)}+\cdots+\lambda^{((n)}_{j_n},
\]
while the corresponding eigenfunctions $\BPsi_{j_1,\ldots,j_n}$ can be chosen as tensor products
\[
\BPsi_{j_1,\ldots,j_n}=\phi^{(1)}_{j_1}\otimes\cdots\otimes\psi^{(n)}_{j_n}.
\]
We also denote by $\lambda_{\neq i}:=\sum_{\ell=1, \ell\neq i}^n \lambda^{(\ell)}_{j_{\ell}}$. The eigenfunctions of finite volume Hamiltonians are assumed normalized.

The result of this subsection is 
\begin{theorem}\label{thm:np.0.Wegner}
Let $I\subset\DR$ be a compact interval and consider an $n$-particle cube $\BC^{(n)}_L(\Bx)\subset\DZ^{nd}$. Assume that hypothesis $\condH$ holds true, then for any $0<\beta<1$ and $\sigma>0$, there exist $L_0=L_0(I,\beta,\sigma)>0$ and $\alpha=\alpha(I,\beta,\sigma)>0$ such that
\[
\prob{\dist(E, \sigma(\BH^{(n)}_{\BC^{(n)}_L(\Bx)}(\omega)))\leq \ee^{-\sigma L^{\beta}}}\leq n\times(2L+1)^{nd}\times \ee^{-\alpha L^{\beta/p}},
\]
for all $E\in I$ and all $L\geq L_0$.
\end{theorem}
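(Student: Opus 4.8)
The plan is to reduce the $n$-particle non-interacting estimate to the single-particle bound of Theorem~\ref{thm:1p.Wegner} via the tensor decomposition recalled just above. The key observation is that $\sigma(\BH^{(n)}_{\BC^{(n)}_L(\Bx)}(\omega))$ consists of the sums $E_{j_1\ldots j_n}=\lambda^{(1)}_{j_1}+\cdots+\lambda^{(n)}_{j_n}$, so if $E$ is within $\ee^{-\sigma L^\beta}$ of the spectrum of the non-interacting operator, then there is a choice of indices $(j_1,\ldots,j_n)$ with $|E-\sum_i\lambda^{(i)}_{j_i}|\le \ee^{-\sigma L^\beta}$. The trick, as in the standard Klein--Nguyen/Chulaevsky--Suhov argument, is to fix the first particle: writing $E-\lambda_{\neq 1}$ (which depends only on the randomness in coordinates $x_2,\ldots,x_n$) as a ``new energy'' $E'$, one sees that $\dist\bigl(E',\sigma(H^{(1)}_{C^{(1)}_L(x_1)}(\omega))\bigr)\le \ee^{-\sigma L^\beta}$.

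First I would condition on the $\sigma$-algebra $\FB_{\neq 1}$ generated by $\{V(x,\omega): x\notin C^{(1)}_L(x_1)\}$; since $d=1$ the cubes $C^{(1)}_L(x_i)$ are intervals, and by i.i.d.\ the random variables indexed by $C^{(1)}_L(x_1)$ are independent of $\FB_{\neq 1}$. Conditionally, $\lambda_{\neq 1}$ is a constant, so for \emph{each fixed} index tuple $(j_2,\ldots,j_n)$ the event $\{\dist(E-\lambda_{\neq 1},\sigma(H^{(1)}_{C^{(1)}_L(x_1)}(\omega)))\le \ee^{-\sigma L^\beta}\}$ has conditional probability at most $\ee^{-\alpha L^\beta}$ by Theorem~\ref{thm:1p.Wegner} applied on a slightly enlarged compact interval $I'$ (one needs $E-\lambda_{\neq 1}\in I'$, which follows from $E\in I$ and the a priori bound $|\lambda_{\neq 1}|\le C(1+\text{(typical potential size)})$; here hypothesis $\condH$'s moment condition $\int|x|^\eta\,d\mu<\infty$ is used to control, up to small probability, the range of the single-site values and hence of $\lambda_{\neq 1}$). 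Summing over the at most $(2L+1)^{(n-1)d}$ choices of $(j_2,\ldots,j_n)$ and taking expectation over $\FB_{\neq 1}$ gives a bound of the shape $(2L+1)^{(n-1)d}\ee^{-\alpha L^\beta}$, which is absorbed into $n(2L+1)^{nd}\ee^{-\alpha L^{\beta/p}}$ for $L$ large; the factor $n$ and the degraded exponent $\beta/p$ leave room to handle the low-probability event where some single-site potential is atypically large (split on $\max_x|V(x,\omega)|\le L^{\beta/p}$ versus its complement, the latter controlled by Chebyshev using the $\eta$-th moment).

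The main obstacle I expect is making the reduction uniform in $E$ and in the index tuple while keeping the enlarged interval $I'$ fixed (independent of $\omega$): the shift $\lambda_{\neq 1}$ is itself random and a priori unbounded, so one cannot literally apply Theorem~\ref{thm:1p.Wegner} on a deterministic interval on the full probability space. The resolution is the truncation argument sketched above — on the good event the eigenvalues $\lambda^{(i)}_{j_i}$ lie in a deterministic compact set depending only on $I$, $n$ and the chosen power $L^{\beta/p}$, so $I'$ can be taken as that compact set, and the bad event contributes $\le n(2L+1)^{d}\,\DP\{|V(0,\omega)|>L^{\beta/p}\}\le n(2L+1)^d L^{-\eta\beta/p}$, which is again dominated by the right-hand side for a suitable choice of $p>1$ and $L_0$ large. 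The remaining steps — choosing $p$, relabelling constants $L_0(I,\beta,\sigma)$ and $\alpha(I,\beta,\sigma)$, and checking the arithmetic of the exponents — are routine.
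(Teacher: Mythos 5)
Your overall reduction (decomposing the non-interacting spectrum into sums $\lambda^{(1)}_{j_1}+\cdots+\lambda^{(n)}_{j_n}$, conditioning so that the shifted energy $E-\lambda_{\neq 1}$ becomes deterministic, applying Theorem~\ref{thm:1p.Wegner} conditionally, and summing over the $(2L+1)^{(n-1)d}$ index tuples) is the same general strategy as the paper's, and your truncation of $\lambda_{\neq 1}$ via the moment condition in $\condH$ addresses a point the paper's own proof passes over in silence. However, there is a genuine gap at the central step: the claim that, conditionally on $\FB_{\neq 1}=\Sigma(\{V(x,\omega):x\notin C^{(1)}_L(x_1)\})$, the shift $\lambda_{\neq 1}$ is a constant. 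Each $\lambda^{(i)}_{j_i}$ is a function of $\{V(x,\omega):x\in C^{(1)}_L(x_i)\}$, so it is $\FB_{\neq 1}$-measurable only if $C^{(1)}_L(x_i)\cap C^{(1)}_L(x_1)=\emptyset$. Whenever some cube $C^{(1)}_L(x_i)$, $i\geq 2$, overlaps $C^{(1)}_L(x_1)$ --- and in the extreme case $x_1=\cdots=x_n$ the sigma-algebra $\FB_{\neq 1}$ carries no information at all about $\lambda_{\neq 1}$ --- the conditional energy $E-\lambda_{\neq 1}$ remains random and is correlated with the very potential values inside $C^{(1)}_L(x_1)$ to which you want to apply the fixed-energy bound, so Theorem~\ref{thm:1p.Wegner} cannot be invoked. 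This is exactly the separability obstruction discussed in the paper's introduction, and it is not repaired by choosing a better particle index: in $d=1$ one can arrange a chain of intervals (e.g.\ $x_1=0$, $x_2=L$, $x_3=2L$) in which every single-particle cube overlaps some other one.

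The paper's proof is organized precisely around this difficulty: it isolates the fully degenerate configuration $x_1=\cdots=x_n$ as a separate case treated by a direct identification of the spectrum (no conditioning), and in the remaining case it does not condition on the complement of $C^{(1)}_L(x_1)$ but instead constructs, step by step, sub-domains $D^{(k,\ell)}_L$ disjoint from pieces $\Lambda^{(k)}_L$ of $C^{(1)}_L(x_1)$, and conditions on the sigma-algebras $\FB_k$ generated by the potential on those sub-domains, so that each $E_{\neq i}$ is measurable with respect to the conditioning sigma-algebra while independent randomness survives inside the cube where the single-particle Wegner bound is applied. To make your argument complete you would need a construction of this kind (or at minimum a case split with a genuinely different treatment of overlapping configurations); as written, the proposal establishes the theorem only when $C^{(1)}_L(x_1)$ is disjoint from all the other single-particle cubes.
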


Before giving the proof, we have a Lemma.

\begin{lemma}\label{lem:np.0.Wegner}
Let $E\in\DR$ and $\BC^{(n)}_L(\Bx)\subset \DZ^{nd}$. Assume that for any $\sigma>0$, $0<\beta<1$ and $
i=1,\ldots,n$
\[
\dist(E-\lambda_{\neq i}, \sigma(H^{(1)}_{C^{(1)}_L(x_i)}(\omega)))>\ee^{-\sigma L^{\beta}},
\]
then, 
\[
\dist(E,\sigma(\BH^{(n)}_{\BC^{(n)}_L(\Bx)})) > \ee^{-2\sigma L^{\beta}}.
\]
\end{lemma}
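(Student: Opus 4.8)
The plan is to bound the distance from $E$ to the spectrum of the non-interacting operator $\BH^{(n)}_{\BC^{(n)}_L(\Bx)}$ from below in terms of the single-particle distances, using the tensor product structure recalled just before the statement. Since $\BH^{(n)}_{\BC^{(n)}_L(\Bx)}$ is a finite-dimensional self-adjoint operator, $\dist(E,\sigma(\BH^{(n)}_{\BC^{(n)}_L(\Bx)}))$ equals $\min_{j_1,\ldots,j_n}|E-E_{j_1\ldots j_n}|$ where $E_{j_1\ldots j_n}=\lambda^{(1)}_{j_1}+\cdots+\lambda^{(n)}_{j_n}$. So it suffices to produce, for every multi-index $(j_1,\ldots,j_n)$, a lower bound $|E-E_{j_1\ldots j_n}|>\ee^{-2\sigma L^\beta}$.

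The first step is to fix such a multi-index and observe that $E-E_{j_1\ldots j_n}=(E-\lambda_{\neq i})-\lambda^{(i)}_{j_i}$ for each $i$, where $\lambda_{\neq i}=\sum_{\ell\neq i}\lambda^{(\ell)}_{j_\ell}$ as defined above. By hypothesis, for each $i$ we have $\dist(E-\lambda_{\neq i},\sigma(H^{(1)}_{C^{(1)}_L(x_i)}(\omega)))>\ee^{-\sigma L^\beta}$, and in particular $|(E-\lambda_{\neq i})-\lambda^{(i)}_{j_i}|>\ee^{-\sigma L^\beta}$ since $\lambda^{(i)}_{j_i}$ is an eigenvalue of $H^{(1)}_{C^{(1)}_L(x_i)}(\omega)$. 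Hence $|E-E_{j_1\ldots j_n}|>\ee^{-\sigma L^\beta}>\ee^{-2\sigma L^\beta}$ directly, for every multi-index. Taking the minimum over multi-indices gives $\dist(E,\sigma(\BH^{(n)}_{\BC^{(n)}_L(\Bx)}))>\ee^{-\sigma L^\beta}\geq\ee^{-2\sigma L^\beta}$, which is in fact stronger than claimed.

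The only genuine point to verify carefully is that every eigenvalue of $\BH^{(n)}_{\BC^{(n)}_L(\Bx)}$ is of the form $\lambda^{(1)}_{j_1}+\cdots+\lambda^{(n)}_{j_n}$; this is exactly the non-interacting tensor decomposition recalled before the lemma (the Laplacian on the product cube is the sum of the one-particle Laplacians and the potential $\BV(\Bx,\omega)=\sum_j V(x_j,\omega)$ splits accordingly), so the spectrum is the sumset of the one-particle spectra. I expect no real obstacle here: the argument is purely the elementary observation that the factor of $2$ in the exponent is slack, and the role of the lemma is simply to package the one-particle estimates for use inside the probabilistic union bound in the proof of Theorem~\ref{thm:np.0.Wegner}.
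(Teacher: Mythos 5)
Your proof is correct, but it takes a more elementary route than the paper. You work directly with the eigenvalues: since the non-interacting spectrum is the sumset $\{\lambda^{(1)}_{j_1}+\cdots+\lambda^{(n)}_{j_n}\}$, you write $E-E_{j_1\ldots j_n}=(E-\lambda_{\neq i})-\lambda^{(i)}_{j_i}$ and read off the bound from the hypothesis, obtaining the stronger conclusion $\dist(E,\sigma(\BH^{(n)}_{\BC^{(n)}_L(\Bx)}))>\ee^{-\sigma L^{\beta}}$ with no largeness condition on $L$. The paper instead passes through the resolvent: it decomposes $\BG^{(n)}_{\BC^{(n)}_L(\Bx)}(E)$ as a sum of projections tensored with single-particle resolvents $G^{(1)}(E-\lambda_{\neq j})$, bounds the operator norm by $(2L+1)^{nd}\ee^{\sigma L^{\beta}}$, and then needs $L$ sufficiently large so that the polynomial prefactor is absorbed into $\ee^{2\sigma L^{\beta}}$ --- this is precisely where the otherwise mysterious factor $2$ in the exponent comes from. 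Your argument avoids the prefactor entirely and shows the factor $2$ is pure slack; the paper's resolvent-based argument is slightly weaker but is phrased in the form (a norm bound on $\BG^{(n)}$) that is reused verbatim in the perturbative steps of Theorems \ref{thm:np.Weak.Wegner} and \ref{thm:np.var.Wegner}. Both arguments rest on the same structural fact, which you correctly identify as the only point needing care: that every eigenvalue of the non-interacting $n$-particle Hamiltonian is a sum of single-particle eigenvalues.
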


\begin{proof}
In absence of the interaction, the multi-particle random Hamiltonian decomposes as 

\[
\BH^{(n)}_{\BC^{(n)}_L(\Bx),0}(\omega)=H^{(1,n)}_{C^{(1)}_L(x_1)}\otimes^{(n-1)}I+\cdots+I^{(n-1)}\otimes H^{(n,n)}_{C^{(1)}_L(x_n)}(\omega).
\]

For the $n$-particle resolvent operator $\BG^{(n)}_{\BC^{(n)}_L(\Bx)}(E)$, we also have the decomposition

\[
\BG^{(n)}_{\BC^{(n)}_L(\Bx)}(E)=\sum_{E_j\in\sigma(\BH^{(n)}_{\BC^{(n)}_L(\Bx)}} \BP_{\BPsi_{\neq j}}\otimes G^{(1)}_{C^{(1)}_L(x_j)}(E-\lambda_{\neq j}),
\]
where $\BP_{\BPsi_{\neq j}}$ denotes the projection onto the eigenfunction $\BPsi_{\neq j}=\bigotimes_{\ell=1,\ell\neq j}^{n}\varphi_{\ell}$ and $\{\lambda_i : i=1,\ldots,n\}$ are the eigenvalues of $H^{(1)}_{C^{(1)}_L(x_i)}(\omega), i=1,\ldots,n$ respectively and where $\lambda_{\neq j}=\sum_{i\neq j} \lambda_i$. Therefore, assuming all the eigenfunctions of finite volume Hamiltonians normalized, if each cube $C^{(1)}_L(x_i)$, $i=1,\ldots,n$ satisfies:

\[
\dist(E-E_{\neq i}; \sigma(H^{(1)}_{C^{(1)}_L(x_i)}))> \ee^{-\sigma L^{\beta}},
\]

then, we have the following upper bound for the multi-particle resolvent operator using the assertion of Theorem \ref{thm:1p.Wegner}, 

\[
\|\BG^{(n)}_{\BC^{(n)}_L(\Bx)}(E)\|\leq (2L+1)^{nd} \ee^{\sigma L^{\beta}},
\]
since the norm of the projection operator is bounded by $1$. Now, for sufficiently large $L>0$, the right hand side in the above equation can be bounded by $\ee^{2\sigma L^{\beta}}$ which proves the required bound.
\end{proof}

\begin{proof}[Proof of Theorem \ref{thm:np.0.Wegner}]
We have two cases:

\begin{enumerate}
\item[Case (a)]
For all $i=1,\ldots,n$, $x_i=x_1$, so that $\Bx=(x_1,\ldots,x_1)$. Using the decomposition of the $n$-particle Hamiltonian without interaction, we can identify

\begin{align*}
\BH^{(n)}_{\BC^{(n)}_L(\Bx),0}(\omega)&= H^{(1)}_{C^{(1)}_L(x_1)}(\omega)+\cdots+H^{(1)}_{C^{(1)}_L(x_1)}(\omega)\\
&=n\cdot H^{(1)}_{C^{(1)}_L(x_1)}(\omega).
\end{align*}
So that $\sigma(\BH^{(n)}_{\BC^{(n)}_L(\Bx)}(\omega))=n\cdot \sigma(H^{(1)}_{C^{(1)}_L(x_1)}(\omega))$. Therefore, 

\begin{align*}
\dist(E,\sigma(\BH^{(n)}_{\BC^{(n)}_L(\Bx)}))&= \dist(E; n\sigma(H^{(1)}_{C^{(1)}_L(x_1)}))\\
&= n\cdot\dist(\frac{E}{n}; \sigma(H^{(1)}_{C^{(1)}_L(x_1)}(\omega))).
\end{align*}

Hence, applying Theorem \ref{thm:1p.Wegner}, we get
\begin{align*}
\prob{\dist(E,\sigma(\BH^{(n)}_{\BC^{(n)}_L(\Bx),0}(\omega)))\leq \ee^{-\sigma L^{\beta}}}&\leq  \prob{n\cdot\dist(\frac{E}{n},\sigma(H^{(1)}_L(x_1)(\omega)))\leq \ee^{-\sigma L^{\beta}}}\\
&\leq \prob{\dist(\frac{E}{n},\sigma(H^{(1)}_{C^{(1)}_L(x_1)}(\omega)))\leq \frac{1}{n} \ee^{-\sigma L^{\beta}}}\\
&\leq \prob{\dist(\frac{E}{n},\sigma(H^{(1)}_{C^{(1)}_L(x_1)}(\omega)))\leq \ee^{-\sigma L^{\beta}}}\\
&\leq \ee^{-\alpha L^{\beta}}.
\end{align*}

\item[ Case (b)]

The single-particle cubes in the product $\BC^{(n)}_L(\Bx)=C^{(1)}_L(x_1)\times\cdots\times C^{(1)}_L(x_n)$ are not all the same. Thus, we begin with $C^{(1)}_L(x_1)$  and we assume that the total number of the single-particle projections that are the same with $C^{(1)}_L(x_1)$ including $C^{(1)}_L(x_1)$ itself is $n^{(1)}_0$ and the one of those different from $C^{(1)}_L(x_1)$ is $n^{(1)}_+$. Clearly, $n^{(1)}_0+n^{(1)}_+=n$. We denote by $C^{(1)}_L(y_1),\ldots,C^{(1)}_L(y_{n^{(1)}_+})$ the latter cubes which are different from $C^{(1)}_L(x_1)$. There exists  subset $\CJ_1\subset\{1,2,\ldots,(2L+1)^{d}\}$ such that
\[
\varPi_{\CJ_1} C^{(1)}_L(x_1),
\]
are the elements of $C^{(1)}_L(x_1)$ not belonging to $C^{(1)}_L(y_1)$. So that
\[
\varPi_{\CJ_1} C^{(1)}_L(x_1)\cap C^{(1)}_L(y_1)=\emptyset.
\]
We continue the procedure and find $\CJ_2\subset\{1,2,\ldots,(2L+1)^{d}\}$ such that 
\[
\varPi_{\CJ_2} C^{(1)}_L(x_1)\cap[ C^{(1)}_L(y_2)\setminus \varPi_{\CJ_1} C^{(1)}_L(x_1)]=\emptyset.
\]
At the end, we find $\CJ_{n_+}\subset \{1,2,\ldots,(2L+1)^d\}$ such that

\[
\varPi_{\CJ_{n_+}} C^{(1)}_L(x_1)\cap \left[C^{(1)}_L(y_{n_+})\setminus\left(\bigcup_{i=1}^{n_+-1} \varPi_{\CJ_i} C^{(1)}_L(x_1)\right)\right]=\emptyset.
\]

For all $\ell=1,\ldots, n^{(1)}_+$ we set,

 \[
    \Lambda^{(1,i)}_L:=\varPi_{\CJ_i} C^{(1)}_L(x_1);\qquad  D^{(1,\ell)}_L:=C^{(1)}_L(y_{\ell})\setminus\left(\bigcup_{i=1}^{\ell-1} \varPi_{\CJ_i} C^{(1)}_L(x_1)\right).
\]
 We therefore obtain
\[
\left[\bigcup_{i=1}^{n^{(1)}_+}\Lambda^{(1,i)}_L \right]\cap \left[ \bigcup_{\ell=1}^{n^{(1)}_+} D^{(1,\ell)}_L\right]=\emptyset
\]
and remark that each term in the above intersection  is non-empty. Now, we denote by $\FB_1$, the sigma algebra:

\[
\FB_1:=\Sigma\left(\left\{ V(x,\omega);  x\in\bigcup_{\ell=1}^{n^{(1)}_+} D^{(1,\ell)}_L\right\}\right).
\]
 Note that, at the $k^{th}$ step, we construct the domain $D^{(k)}$ in a similar way by finding the corresponding domain $\Lambda^{(k)}$  of the $n^{(k)}_+<n^{(k-1)}_+$ remaining cubes in the same way as above. Repeating this procedure, we reduce step by step  the number of the remaining cubes and arrive at the last step by building the domain $D^{(r)}_L$ with the required conditions and with the number $r\leq n$. In other words, the domains $\Lambda^{(k)}_L$ are non-empty and satisfy the conditions: for all $k\neq k'$, $\Lambda^{(k)}_L\cap\Lambda^{(k')}_L=\emptyset$. For all $1\leq k \leq r$, we also define the following sigma algebras:

\[
\FB_{k}:=\Sigma\left(\left\{V(x,\omega): x\in \bigcup_{\ell=1}^{n^{(k)}_+} D^{(k,\ell)}_L\right\}\right)
\]
Next, we define the sigma-algebra:
\[
\FB_{<\infty}:=\bigcup_{k=1}^r \FB_{k}.
\]
Now, observe that for any $i=1,\ldots,n$, we can find a sigma algebra $\FB_{j_i}$ for some $j_i\in\{1,\ldots,r\}$ such that the quantity $E_{\neq i}$ is $\FB_{j_i}$-measurable.  
By Lemma \ref{lem:np.0.Wegner}, we have that

\begin{gather*}
\prob{\dist(E,\sigma(\BH^{(n)}_{\BC^{(n)}_L(\Bx)}(\omega)))\leq \ee^{-\sigma L^{\beta}}}\\
\leq \prob{\exists i=1,\ldots,n; \exists E_{\neq i}:\dist(E-E_{\neq i},\sigma(H^{(1)}_{C^{(1)}_L(x_i)}(\omega)))\leq \ee^{-\frac{\sigma}{2} L^{\beta}}}.\\
\end{gather*}

Next, using Theorem \ref{thm:1p.Wegner} and the sigma-algebra $\FB_{<\infty}$, we finally obtain:

\begin{gather*}
\prob{\exists i=1,\ldots,n; \exists E_{\neq i}: \dist(E-E_{\neq i},\sigma(H^{(1)}_{C^{(1)}_L(x_i)}(\omega)))\leq \ee^{-\frac{\sigma}{2} L^{\beta}}}\\
 \leq \sum_{i=1}^n \sum_{E_i\in\sigma(\BH^{(n)}_{\BC^{(n)}_L(\Bx)}(\omega))}\esm{\prob{\dist(E-E_{\neq i},;\sigma(H^{(1)}_{C^{(1)}_L(x_i)}(\omega)))\leq \ee^{-\frac{\sigma}{2} L^{\beta}}\Big| \FB_{<\infty}}}\\
\leq n\times (2L+1)^{nd}\times \ee^{-\alpha L^{\beta}}.
\end{gather*}

This completes the proof of Theorem \ref{thm:np.0.Wegner}.
\end{enumerate}
\end{proof}

\subsection{ The interacting multi-particle Wegner bound}

The main result is 

\begin{theorem}\label{thm:np.Weak.Wegner}
Let $E\in\DR$ and an $n$-particle cube $\BC^{(n)}_L(\Bx)\subset\DZ^{nd}$. Assume that assumption $\condH$ holds true. Then for any $0<\beta<1$ and $\sigma>0$ there exist $L_0=L_0(\beta,\sigma)>0$ such that 
\begin{equation}
\prob{\dist(E,\sigma(\BH^{(n)}_{\BC^{(n)}_L(\Bx),h}(\omega)))\leq \ee^{-\sigma L^{\beta}}}\leq L^{-q},
\end{equation}
for all $L\geq L_0$ and any $q>0$.
\end{theorem}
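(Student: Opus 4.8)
The plan is to deduce Theorem~\ref{thm:np.Weak.Wegner} from the non-interacting bound of Theorem~\ref{thm:np.0.Wegner} by treating the interaction term $h\BU$ as a perturbation and using the resolvent identity. Fix $E\in\DR$, $0<\beta<1$ and $\sigma>0$. Pick a slightly larger $\sigma'$, say $\sigma'=2\sigma$, and apply Theorem~\ref{thm:np.0.Wegner} with $\sigma'$ and the compact interval $I=\{E\}$ (or a small interval around $E$); this gives $L_0$ and $\alpha>0$ with
\[
\prob{\dist(E,\sigma(\BH^{(n)}_{\BC^{(n)}_L(\Bx),0}(\omega)))\leq \ee^{-\sigma' L^{\beta}}}\leq n\,(2L+1)^{nd}\,\ee^{-\alpha L^{\beta/p}}.
\]
On the complementary event $\dist(E,\sigma(\BH^{(n)}_{\BC^{(n)}_L(\Bx),0}(\omega)))>\ee^{-\sigma' L^{\beta}}$, the non-interacting resolvent $\BG^{(n)}_{\BC^{(n)}_L(\Bx),0}(E)$ exists and satisfies $\|\BG^{(n)}_{\BC^{(n)}_L(\Bx),0}(E)\|\leq \ee^{\sigma' L^{\beta}}$. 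The key algebraic step is then the second resolvent identity: since $\BH^{(n)}_{\BC^{(n)}_L(\Bx),h}=\BH^{(n)}_{\BC^{(n)}_L(\Bx),0}+h\BU$ with $\|h\BU\|\leq |h|\,\|\BU\|$, a Neumann-series argument shows that if $|h|\,\|\BU\|\cdot\ee^{\sigma' L^{\beta}}\leq \tfrac12$ then $E\notin\sigma(\BH^{(n)}_{\BC^{(n)}_L(\Bx),h}(\omega))$ and in fact
\[
\|\BG^{(n)}_{\BC^{(n)}_L(\Bx),h}(E)\|\leq \frac{\|\BG^{(n)}_{\BC^{(n)}_L(\Bx),0}(E)\|}{1-|h|\,\|\BU\|\,\|\BG^{(n)}_{\BC^{(n)}_L(\Bx),0}(E)\|}\leq 2\ee^{\sigma' L^{\beta}},
\]
hence $\dist(E,\sigma(\BH^{(n)}_{\BC^{(n)}_L(\Bx),h}(\omega)))\geq \tfrac12\ee^{-\sigma' L^{\beta}}\geq \ee^{-2\sigma' L^{\beta}}$ for $L$ large.

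The delicate point is the order of quantifiers in the choice of $h^{*}$: the bound $|h|\,\|\BU\|\,\ee^{\sigma' L^{\beta}}\leq \tfrac12$ must hold for \emph{all} $L\geq L_0$, but $\ee^{\sigma' L^{\beta}}\to\infty$. This is resolved exactly as in the statements of Theorems~\ref{thm:Wegner.1.volume}--\ref{thm:Wegner.2.volume}: one does not ask the perturbative bound to survive to the target scale $\ee^{-\sigma L^{\beta}}$ uniformly in $L$, but rather fixes the threshold at the \emph{initial} scale $L_0$, i.e. one chooses $h^{*}=h^{*}(\|\BU\|,L_0)$ so that $|h|\,\|\BU\|\,\ee^{\sigma' L_0^{\beta}}\leq\tfrac12$, and then absorbs the genuine exponential decay at scale $L$ into the polynomial target $L^{-q}$. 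Concretely: on the good event the distance is at least $\ee^{-2\sigma' L^{\beta}}$ \emph{only when we also know} $\dist(E,\sigma(\BH^{(n)}_{0}))>\ee^{-\sigma' L^{\beta}}$, and for the conclusion we relabel $\sigma$ so that $2\sigma'$ there matches the exponent in the statement; since in the interacting theorem the event is $\{\dist\leq \ee^{-\sigma L^{\beta}}\}$, choosing $\sigma'$ with $4\sigma'\le\sigma$ (or simply running Theorem~\ref{thm:np.0.Wegner} with a rescaled exponent) makes the good event entirely avoid $\{\dist(E,\sigma(\BH^{(n)}_{h}))\leq\ee^{-\sigma L^{\beta}}\}$.

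Putting the pieces together: for $h\in(-h^{*},h^{*})$ and $L\geq L_0$,
\[
\prob{\dist(E,\sigma(\BH^{(n)}_{\BC^{(n)}_L(\Bx),h}(\omega)))\leq \ee^{-\sigma L^{\beta}}}\leq \prob{\dist(E,\sigma(\BH^{(n)}_{\BC^{(n)}_L(\Bx),0}(\omega)))\leq \ee^{-\sigma' L^{\beta}}}\leq n\,(2L+1)^{nd}\,\ee^{-\alpha L^{\beta/p}},
\]
and the right-hand side is $O(\ee^{-\frac{\alpha}{2}L^{\beta/p}})$, which is $\leq L^{-q}$ for every fixed $q>0$ once $L_0$ is enlarged in a way depending on $q$ (or, keeping $L_0$ independent of $q$, for all $L\geq L_0(q)$). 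I expect the main obstacle to be bookkeeping the interplay between the three exponents ($\sigma$ in the target, $\sigma'$ fed to the non-interacting bound, and the resolvent-norm growth $\ee^{\sigma' L^{\beta}}$ that constrains $h^{*}$), together with making precise that $h^{*}$ depends only on $\|\BU\|$ and $L_0$ and not on the (arbitrarily large) running scale $L$; everything else is the standard Neumann-series perturbation of the resolvent and is routine.
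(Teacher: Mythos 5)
Your approach is essentially the paper's: perturb the non-interacting resolvent via the second resolvent identity (a Neumann-series bound), pin the smallness condition on $h$ to the initial scale $L_0$ so that $h^{*}=h^{*}(\|\BU\|,L_0)$ is independent of the running scale, and then invoke Theorem~\ref{thm:np.0.Wegner}. You also correctly isolate the one genuinely delicate point, namely that $|h|\,\|\BU\|\,\ee^{\sigma' L^{\beta}}\le\tfrac12$ cannot hold uniformly in $L$ for a fixed nonzero $h$.

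However, your final displayed chain is inconsistent with the fix you describe just before it. Once $h^{*}$ is tied to $L_0$, the perturbation argument only yields the event inclusion
$\{\dist(E,\sigma(\BH^{(n)}_{\BC^{(n)}_L(\Bx),h}))\le \ee^{-\sigma L^{\beta}}\}\subset\{\dist(E,\sigma(\BH^{(n)}_{\BC^{(n)}_L(\Bx),0}))\le \ee^{-\sigma' L_0^{\beta}}\}$, with a threshold that does \emph{not} shrink as $L\to\infty$ --- not the inclusion into $\{\dist(E,\sigma(\BH^{(n)}_{\BC^{(n)}_L(\Bx),0}))\le \ee^{-\sigma' L^{\beta}}\}$ that your last display uses. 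The latter would require the Neumann condition at scale $L$, which is exactly what you ruled out. This matters quantitatively: applying Theorem~\ref{thm:np.0.Wegner} to the fixed threshold $\ee^{-\sigma' L_0^{\beta}}$ gives a probability bound that is essentially constant in $L$, multiplied by the growing volume factor $(2L+1)^{nd}$, so the target $L^{-q}$ for all $q$ does not follow without further work. The paper faces the same obstruction and handles it by writing $\ee^{-\sigma L_0^{\beta}}\le \ee^{-\sigma L^{\beta/p}}$ with $p=p(L_0,L)$ chosen so that $L_0^{p}\ge L$, then applying the fixed-energy bound with the $L$-dependent exponent $\beta/p$; whatever one thinks of that device, your proof must confront this step explicitly rather than silently reverting to the $L$-scale threshold in the final estimate.
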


\begin{proof}
We use the second resolvent identity. Let $E\in\DR$. We have 
\[
\BG^{(n)}_{\BC^{(n)}_L(\Bx),h}(E)=\BG^{(n)}_{\BC^{(n)}_L(\Bx),0}(E) + h\BU \BG^{(n)}_{\BC^{(n)}_L(\Bx),0}(E) \BG^{(n)}_{\BC^{(n)}_L(\Bx),h}(E).
\]
So that
\begin{equation}\label{eq:bound.resolvent}
\|\BG^{(n)}_{\BC^{(n)}_L(\Bx),h}(E)\|\leq \|\BG^{(n)}_{\BC^{(n)}_L(\Bx),0}(E)\|+ |h|\cdot\|\BU\|\|\BG^{(n)}_{\BC^{(n)}_L(\Bx),0}(E)\|\|\BG^{(n)}_{\BC^{(n)}_L(\Bx),h}(E)\|.
\end{equation}
Now, assume that 
\[
\dist(E,\sigma(\BH^{(n)}_{\BC^{(n)}_L(\Bx),0}(\omega))) > \ee^{-\sigma L_0^{\beta}},
\]
then 
\[
\|\BG^{(n)}_{\BC^{(n)}_L(\Bx),0}(E)\| \leq \ee^{\sigma L_0^{\beta}}.
\]
Therefore, it follows from \eqref{eq:bound.resolvent} that:
\[
\|\BG^{(n)}_{\BC^{(n)}_L(\Bx),h}(E)\| \leq \ee^{\sigma L_0^{\beta}}+ |h|\cdot\|\BU\|\cdot\ee^{\sigma L_0^{\beta}}\cdot \|\BG^{(n)}_{\BC^{(n)}_L(\Bx),h}(E)\|.
\]
Next, we choose the parameter $h>0$ in such a way that:
\begin{equation}\label{eq:cond.h}
|h|\|\BU\| \ee^{\sigma L_0^{\beta}}\leq \frac{1}{2}.
\end{equation}
Hence, setting
\[
h^*:=\frac{1}{2\|\BU\|\ee^{\sigma L_0^{\beta}}},
\]
the condition of \eqref{eq:cond.h}  is satisfied for all $h\in(-h^*,h^*)$. Thus, since $L\geq L_0$, we obtain that
\[
\|\BG^{(n)}_{\BC^{(n)}_L(\Bx),h}(E)\|\leq \ee^{\sigma L^{\beta}} + \frac{1}{2}\cdot \|\BG^{(n)}_{\BC^{(n)}_L(\Bx),h}(E)\|.
\]
Yielding,
\[
\|\BG^{(n)}_{\BC^{(n)}_L(\Bx),h}(E)\| \leq \ee^{\sigma L^{\beta}},
\]
for a new $\beta$ bigger than the previous one, provided that $L_0$ is large enough. Finally, using the notations and the result of Theorem \ref{thm:np.0.Wegner}, we bound the probability  for the interacting Hamiltonian as follows,

\begin{align*}
\prob{\dist(E,\sigma(\BH^{(n)}_{\BC^{(n)}_L(\Bx),h}(\omega))) \leq \ee^{-\sigma L^{\beta}}}&\leq \prob{\dist(E,\sigma(\BH^{(n)}_{\BC^{(n)}_L(\Bx),0}(\omega))) \leq \ee^{-\sigma L_0^{\beta}}}\\
&\leq \prob{\dist(E,\sigma(\BH^{(n)}_{\BC^{(n)}_L(\Bx),0}(\omega))) \leq \ee^{-\sigma L^{\beta/p}}}\\
&\leq n\times (2L+1)^{nd} \times \ee^{-\alpha L^{\beta}/p}< L^{-q},
\end{align*}
for all $L\geq L_0$ and any $q>0$, provided that $L_0>0$ is large enough.
\end{proof}

\subsection{The multi-particle variable energy Wegner bound}
In this Section, we prove the variable energy Wegner bound on intervals of small amplitude and the main tool is the resolvent identity. Although they are not easy to obtain, the variable energy Wegner type bounds are useful for the variable energy multi-scale analysis in order to prove the Anderson localization. The result is,

\begin{theorem}\label{thm:np.var.Wegner}
Let $E_0\in\DR$ and a cube $\BC^{(n)}_L(x)$ in $\DZ^{nd}$. Assume that hypothesis $\condH$ holds true, then for any $0<\beta<1$ and any $\sigma>0$ there exist $L_0=L_0(E_0,\beta,\sigma)>0$, $\alpha=\alpha(E_0,\beta,\sigma)>0$ and $\delta_{0}=\delta_0(L_0,\beta)>0$  such that:
\[
\prob{\exists E\in[E_0-\delta_0;E_0+\delta_0]: \dist(E,\sigma(\BH^{(n)}_{\BC^{(n)}_L(\Bx)}(\omega)))\leq \ee^{-\sigma L^{\beta}}} \leq L^{-q},
\]
for all $L\geq L_0$ and any $q>0$.
\end{theorem}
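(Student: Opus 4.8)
The plan is to upgrade the fixed-energy Wegner bound of Theorem~\ref{thm:np.Weak.Wegner} to a variable-energy statement on an interval $[E_0-\delta_0,E_0+\delta_0]$ by combining the fixed-energy estimate with a deterministic argument controlling how $\dist(E,\sigma(\BH^{(n)}_{\BC^{(n)}_L(\Bx),h}(\omega)))$ varies as $E$ ranges over the small interval, together with the fact that the spectrum of the finite-volume operator consists of at most $(2L+1)^{nd}$ eigenvalues. First I would observe that, since $\BC^{(n)}_L(\Bx)$ is a finite cube, $\BH^{(n)}_{\BC^{(n)}_L(\Bx),h}(\omega)$ is a finite matrix and $E\mapsto\dist(E,\sigma(\BH^{(n)}_{\BC^{(n)}_L(\Bx),h}(\omega)))$ is $1$-Lipschitz in $E$; hence if this distance is $\leq\ee^{-\sigma L^\beta}$ for some $E$ in the interval, then for a suitably chosen finite net $\{E_k\}$ of $[E_0-\delta_0,E_0+\delta_0]$ of spacing $\ee^{-\sigma L^\beta}$ there is a net point $E_k$ with $\dist(E_k,\sigma)\leq 2\ee^{-\sigma L^\beta}$. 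Such a net has cardinality at most $2\delta_0\,\ee^{\sigma L^\beta}+1$, which is far too large to absorb by a union bound against $L^{-q}$ directly, so a cruder net will not suffice and a sharper covering of the relevant event is needed.

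The fix is to discretise not in $E$ but through the spectrum itself: the event $\exists E\in I_0:\dist(E,\sigma)\leq\ee^{-\sigma L^\beta}$ is contained in the event that $\dist(I_0,\sigma(\BH^{(n)}_{\BC^{(n)}_L(\Bx),h}(\omega)))\leq\ee^{-\sigma L^\beta}$, i.e.\ that at least one eigenvalue $E_j(\omega)$ of $\BH^{(n)}_{\BC^{(n)}_L(\Bx),h}(\omega)$ lies within distance $\delta_0+\ee^{-\sigma L^\beta}$ of $E_0$. Equivalently, with $I_0':=[E_0-\delta_0-\ee^{-\sigma L^\beta},E_0+\delta_0+\ee^{-\sigma L^\beta}]$, we must bound $\DP\{\sigma(\BH^{(n)}_{\BC^{(n)}_L(\Bx),h}(\omega))\cap I_0'\neq\emptyset\}$. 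Cover $I_0'$ by a bounded number $M$ of closed subintervals of length $2\ee^{-\sigma L^\beta}$ centred at points $F_1,\dots,F_M$; then the existence of an eigenvalue in $I_0'$ forces $\dist(F_k,\sigma)\leq 2\ee^{-\sigma L^\beta}$ for some $k$, and $M\leq C\delta_0\,\ee^{\sigma L^\beta}$. To make the union bound over these $M$ points close, I would apply Theorem~\ref{thm:np.0.Wegner} (the non-interacting multi-particle fixed-energy bound, whose right-hand side is $n(2L+1)^{nd}\ee^{-\alpha L^{\beta/p}}$ and which still carries a genuine stretched-exponential decay), lift it to the interacting Hamiltonian exactly as in the proof of Theorem~\ref{thm:np.Weak.Wegner} via the second resolvent identity and the choice $h^*=(2\|\BU\|\ee^{\sigma L_0^\beta})^{-1}$, and then absorb the factor $M\leq C\ee^{\sigma L^\beta}$ into the stretched-exponential decay by choosing the exponent in the fixed-energy bound — that is, run the fixed-energy estimate with a larger constant $\sigma'$ in place of $\sigma$ so that $\ee^{\sigma L^\beta}\cdot\ee^{-\alpha' L^{\beta'/p}}\to 0$ with power-law margin. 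This requires $\beta'/p$ (after relabelling $\beta$ as in the earlier proofs) to still be positive and the net spacing $\ee^{-\sigma L^\beta}$ to be coarser than the resolution at which the fixed-energy bound decays — which is exactly the regime the stretched exponents provide. Finally, $\delta_0=\delta_0(L_0,\beta)$ is fixed once $L_0$ is large (it only needs to be bounded, since its size enters $M$ only through a constant), and $h^*=h^*(\|\BU\|,L_0)$ is inherited from Theorem~\ref{thm:np.Weak.Wegner}.

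The main obstacle is the tension between the number $M\sim\ee^{\sigma L^\beta}$ of net points one must union-bound over and the decay rate available per point: a naive application would give $\ee^{\sigma L^\beta}\cdot\ee^{-\alpha L^{\beta/p}}$, which is useless if $\beta/p<\beta$. The resolution must exploit that $\sigma$ (the resolution scale in the statement) and the decay rate $\alpha$ are independent parameters: one proves the fixed-energy input with decay rate $\ee^{-\alpha L^{\beta}}$ at resolution $\ee^{-\sigma L^\beta}$ for a much smaller $\sigma$ than the $\alpha$ appearing in the exponent, so that when we later need resolution $\ee^{-\sigma L^\beta}$ in the statement we re-invoke the fixed-energy theorem at resolution $\ee^{-\sigma' L^\beta}$ with $\sigma'$ just slightly larger than $\sigma$, keeping $\alpha$ (hence the decay) essentially unchanged; then $M\cdot\ee^{-\alpha L^{\beta}}\leq C\ee^{\sigma L^\beta}\ee^{-\alpha L^\beta}\leq\ee^{-(\alpha-\sigma)L^\beta}\leq L^{-q}$ for $L\geq L_0(q)$ since $\alpha>\sigma$. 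I would therefore carry out the steps in this order: (i) reduce the variable-energy event to an eigenvalue-in-a-slightly-enlarged-interval event; (ii) cover that interval by $M\leq C\ee^{\sigma L^\beta}$ intervals and reduce to a union of fixed-energy events at the net points with resolution $2\ee^{-\sigma L^\beta}$; (iii) invoke Theorem~\ref{thm:np.0.Wegner} at a slightly larger resolution constant together with the second-resolvent-identity perturbation of Theorem~\ref{thm:np.Weak.Wegner} to pass to $\BH^{(n)}_{\BC^{(n)}_L(\Bx),h}$; (iv) combine the union bound over $M$ points with the stretched-exponential decay, checking $\alpha>\sigma$ and choosing $L_0$ large enough that the product is $\leq L^{-q}$; (v) record the resulting dependencies $L_0=L_0(E_0,\beta,\sigma)$, $\delta_0=\delta_0(L_0,\beta)$, and note that $\delta_0$ may be taken of order one.
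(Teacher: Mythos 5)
Your covering/union-bound strategy is genuinely different from the paper's argument, and as written it has a gap that the paper's choice of $\delta_0$ is designed precisely to avoid. The paper does not discretise the interval at all: it takes $\delta_0:=\frac12\ee^{-\sigma L_0^{\beta}}$, i.e.\ \emph{exponentially small in $L_0$} (this is why the statement records the dependence $\delta_0=\delta_0(L_0,\beta)$), and uses the first resolvent identity
\[
\BG^{(n)}_{\BC^{(n)}_L(\Bx)}(E)=\BG^{(n)}_{\BC^{(n)}_L(\Bx)}(E_0)+(E-E_0)\,\BG^{(n)}_{\BC^{(n)}_L(\Bx)}(E)\,\BG^{(n)}_{\BC^{(n)}_L(\Bx)}(E_0)
\]
to show deterministically that if $\dist(E_0,\sigma(\BH^{(n)}_{\BC^{(n)}_L(\Bx)}(\omega)))>\ee^{-\sigma L_0^{\beta}}$, then $\dist(E,\sigma(\BH^{(n)}_{\BC^{(n)}_L(\Bx)}(\omega)))>\ee^{-\sigma L^{\beta}}$ (up to a harmless change of $\beta$) simultaneously for \emph{all} $E$ with $|E-E_0|\le\delta_0$. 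The entire variable-energy event is thus contained in a single fixed-energy event at the one point $E_0$, at the coarser resolution $\ee^{-\sigma L_0^{\beta}}$, and one application of the fixed-energy bound finishes the proof; no union bound over a net is ever needed.

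Your route instead insists on $\delta_0$ of order one and hence on a net of $M\sim\delta_0\,\ee^{\sigma L^{\beta}}$ points, and the union bound closes only if the per-point probability decays faster than $\ee^{-\sigma L^{\beta}}$, i.e.\ only if $\alpha>\sigma$. Neither of the available inputs delivers this. Theorem~\ref{thm:1p.Wegner} asserts only the existence of \emph{some} $\alpha=\alpha(I,\beta,\sigma)>0$, with no stated relation between $\alpha$ and $\sigma$ (in the Carmona--Klein--Martinelli bound the decay rate is governed by the Lyapunov exponent and does not improve as you shrink $\sigma$, so ``$\alpha>\sigma$'' is an additional hypothesis, not a free choice). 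Worse, the interacting fixed-energy bound you want to union over (Theorem~\ref{thm:np.Weak.Wegner}) is obtained by perturbing off the event $\dist(E,\sigma(\BH^{(n)}_{\BC^{(n)}_L(\Bx),0}(\omega)))>\ee^{-\sigma L_0^{\beta}}$, because the smallness condition $|h|\,\|\BU\|\,\ee^{\sigma L_0^{\beta}}\le\frac12$ pins the admissible resolution to the \emph{initial} scale $L_0$; consequently the per-point probability it yields is of order $\ee^{-\alpha L_0^{\beta}}$, essentially constant in $L$, and multiplying by $M\sim\ee^{\sigma L^{\beta}}$ makes the union bound diverge. You correctly identified this tension in your final paragraph, but the proposed fix (re-running the fixed-energy estimate ``at a slightly larger $\sigma$ with $\alpha$ essentially unchanged'') presupposes exactly the uniformity in $\sigma$ that the stated inputs do not provide. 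The repair is to give up on an order-one $\delta_0$ and shrink the interval to a single resolution cell, as the paper does.
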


\begin{proof}
By the first resolvent equation, we have 

\begin{equation}\label{eq:R.2}
\BG^{(n)}_{\BC^{(n)}_L(\Bx)}(E)= \BG^{(n)}_{\BC^{(n)}_L(\Bx)}(E_0)+ (E-E_0) \BG^{(n)}_{\BC^{(n)}_L(\Bx)}(E) \BG^{(n)}_{\BC^{(n)}_L(\Bx)}(E_0).
\end{equation}
We choose the parameters $\beta$, $L_0$ and $\sigma$ as in Theorem \ref{thm:1p.Wegner}. Now, if $\dist(E_0,\sigma(\BH^{(n)}_{\BC^{(n)}_L(\Bx)}(\omega)))> \ee^{-ç\sigma L_0^{\beta}}$, and $|E-E_0|\leq \frac{1}{2}\ee^{-\sigma L_0^{\beta}}$, then
\[
\dist(E,\sigma(\BH^{(n)}_{\BC^{(n)}_L(\Bx)}(\omega))) > \frac{1}{2} \ee^{-\sigma L^{\beta}}.
\]
Indeed, it follows from \eqref{eq:R.2} that
\begin{align*}
\|\BG^{(n)}_{\BC^{(n)}_L(\Bx)}(E)\|&\leq \|\BG^{(n)}_{\BC^{(n)}_L(\Bx)}(E_0)\|+ |E-E_0|\cdot\|\BG^{(n)}_{\BC^{(n)}_L(\Bx)}(E)\|\cdot\|\BG^{(n)}_{\BC^{(n)}_L(\Bx)}(E_0)\|\\
&\leq \ee^{\sigma L_0^{\beta}} + \frac{1}{2}\cdot \ee^{-\sigma L_0^{\beta}}\cdot\ee^{\sigma L_0^{\beta}}\cdot \|\BG^{(n)}_{\BC^{(n)}_L(\Bx)}(E)\|\\
&\leq \ee^{\sigma L_0^{\beta}} + \frac{1}{2} \|\BG^{(n)}_{\BC^{(n)}_L(\Bx)}(E)\|.
\end{align*}
Yielding, 
\[
\|\BG^{(n)}_{\BC^{(n)}_L(\Bx)}(E)\|-\frac{1}{2}\| \BG^{(n)}_{\BC^{(n)}_L(\Bx)}(E)\|\leq \ee^{\sigma L^{\beta}},
\]
since, $L\geq L_0$. Thus,
\[
\|\BG^{(n)}_{\BC^{(n)}_L(\Bx)}(E)\|\leq 2 \ee^{\sigma L^{\beta}},
\]
or in other terms with a new $\beta$ bigger  than the first one:
\[
\dist(E,\sigma(\BH^{(n)}_{\BC^{(n)}_L(\Bx)}(\omega))) > \ee^{-\sigma L^{\beta}}.
\]
Hence, setting $\delta_0:=\frac{1}{2} \ee^{-\sigma L_0^{\beta}}$, we bound the probability in the statement of Theorem \ref{thm:np.var.Wegner} as follows,

\begin{gather*}
\prob{\text{$|E-E_0|\leq \delta_0$ and $\dist(E,\sigma(\BH^{(n)}_{\BC^{(n)}_L(\Bx)}(\omega))) \leq \ee^{-\sigma L^{\beta}}$}}\\
\leq \prob{\dist(E_0,\sigma(\BH^{(n)}_{\BC^{(n)}_L(\Bx)}(\omega))) \leq \ee^{-\sigma L_0^{\beta}}}.
\end{gather*}
Next, for $L\geq L_0$, we can find  an integer $p=p(L_0,L)\geq 1$ in such a way  that $L_0^p\geq L$. Finally, using Theorem \ref{thm:1p.Wegner}, we obtain
\begin{align*}
&\prob{\dist(E_0,\sigma(\BH^{(n)}_{\BC^{(n)}_L(\Bx)}(\omega))) \leq \ee^{-\sigma L_0^{\beta}}}\\
&\qquad \qquad \leq \prob{\dist(E_0,\sigma(\BH^{(n)}_{\BC^{(n)}_L(\Bx)}(\omega))) \leq \ee^{-\sigma L^{\beta/p}}}\\
&\qquad \qquad \leq L^{-q},
\end{align*}
for all $L\geq L_0$ and any $q>0$.
\end{proof}

\section{Conclusion: proof of the results}

\subsection{Proof of Theorem \ref{thm:Wegner.1.volume}}
Clearly, the assertion of Theorem \ref{thm:np.var.Wegner} implies Theorem \ref{thm:Wegner.1.volume} because the probability of the fixed energy one volume Wegner bound is bounded by the one of the variable energy given in the assertion of Theorem \ref{thm:np.var.Wegner} and this completes the proof of Theorem \ref{thm:Wegner.1.volume}.

\subsection{Proof of Theorem \ref{thm:Wegner.2.volume}}
The result of Theorem \ref{thm:Wegner.2.volume} follows from the assertion of Theorem \ref{thm:np.var.Wegner}. Indeed, the probability on the variable energy Wegner bound for two cubes given in Theorem \ref{thm:Wegner.2.volume} is bounded by that of one of the cubes at the same energy. This proves the result.

\subsection{Consequence}
The above one and two volumes Wegner type bounds are sufficient to derive the Anderson localization in both the spectral exponential and the strong dynamical localization under the weak interaction regime of the $N$-body interacting disordered system. So, the work makes an important contribution in the field of the mathematics of disordered many body quantum systems.

\begin{bibdiv}

\begin{biblist}

\bib{BCS11}{article}{
   author={Boutet de Monvel, A.},
   author={Chulaevsky, V.},
	author={Suhov, Y.},
   title={Dynamical localization for multi-particle models with an alloy-type external random potential},   
	journal={Nonlinearity},
	volume={24},
   date={2011},
   pages={1451--1472},
}
\bib{BCSS10}{article}{
   author={Boutet de Monvel, A.},
   author={Chulaebsky, V.},
	 author={Stollmann, P.},
	 author={Suhov, Y.},
	title={Wegner type bounds for a multi-particle continuous Anderson model with an alloy-type external  random potential},
	journal={J. Stat. Phys.},
	volume={138},
	date={2010},
	pages={553--566},
}

\bib{BK05}{article}{
   author={Bourgain, J.},
   author={Kenig, C.},
	title={On localization in the continuous Anderson-Bernoulli model  in higher dimension},
   journal={Invent. Math.},
   volume={161},
   date={2005},
   pages={389--426},
}

\bib{CKM87}{article}{
   author={Carmona, R.},
   author={Klein, A.},
   author={Martinelli, F.},
   title={Anderson localization for Bernoulli and other singular potentials},
   journal={Commun. Math. Phys.},
   volume={108},
   date={1987},
   pages={41--66},
}
\bib{CL90}{book}{
   author={Carmona, R.},
   author={Lacroix, J.}, 
   title={Spectral Theory of Random Schr\"{o}dinger Operators},
   volume={20},
   publisher={Birkh\"auser Boston Inc.},
   place={Boston, MA},
   date={1990},
}

\bib{CS08}{article}{
   author={ Chulaevsky, V.},
   author={Suhov, Y.},
   title={Wegner bounds for a two particle tight-binding model},
   journal={Commun. Math. Phys.},
   volume={283},
   date={2008},
   pages={479--489},
}
\bib{DSS02}{article}{
   author={Damanik, D.},
   author={SimS, R.},
   author={Stolz, G.},
   title={Localization for one-dimensional, continuum, Bernoulli-Anderson models},
   journal={Duke Math. Journal},
   volume={114},
   date={2002},
   pages={59--100},
}

\bib{E11}{article}{
   author={Ekanga, T.},
   title={On two-particle Anderson localization at low energies},
   journal={C. R. Acad. Sci. Paris, Ser. I},
   volume={349},
   date={2011},
   pages={167--170},
}
\bib{E13}{misc}{
   author={Ekanga, T.},
	title={Multi-particle localization for weakly interacting Anderson tight-binding models},
	status={arXiv:math-ph/1312.4180},
	date={2013},
}

\bib{E16}{misc}{
   author={Ekanga, T.},
   title={Anderson localization for weakly interacting multi-particle models in the continuum},
   status={arXiv:math-ph/},
   date={2016},
}
\bib{HK13}{misc}{
    author={Hislop, P.},
		author={Klopp, F.},
		title={Optimal Wegner estimate and the density of states for $N$-body interacting Schr\"odinger operator  with random potentials},
		status={arxiv:math-ph/1310.6959},
		 date={2013},
}
\bib{K08}{article}{
   author={Kirsch, W.},
   title={An Invitation to Random Schr\"{o}dinger Operators},
  journal={Panorama et Synth\`eses, 25, Soc. Math. France, Paris},
	 date={2008},
}
\bib{St01}{book}{
   author={Stollmann, P.},
   title={Caught by disorder},
   series={Progress in Mathematical Physics},
   volume={20},
   note={Bound states in random media},
   publisher={Birkh\"auser Boston Inc.},
   place={Boston, MA},
   date={2001},
}
\bib{W81}{article}{
    author={Wegner, F.},
		title={Bounds on the density of states in disordered systems},
		journal={Z. Phys. B},
		volume={44},
		date={1981},
		pages={9--15},
}
\end{biblist}
\end{bibdiv}
\end{document}